\documentclass[reqno,12pt]{amsart}
\numberwithin{equation}{section}
\usepackage{amsfonts, amsthm}
\newtheorem{lemma}{Lemma}
\newtheorem{theorem}{Theorem}

\theoremstyle{definition}

\selectfont 
\usepackage{hyperref}
\usepackage{orcidlink}

\begin{document}

\title[Transformations preserving the Darboux integrability]{On B\"acklund transformations preserving the Darboux integrability of hyperbolic equations}

\author{S. Ya. Startsev\,\,\orcidlink{0000-0001-5891-6191}}\thanks{\href{http://www.researcherid.com/rid/D-1158-2009}{Web of Science ResearcherID: D-1158-2009}}

\address{Institute of Mathematics, Ufa Federal Research Centre, Russian Academy of Sciences}  

\begin{abstract} 
This paper deals with two kinds of B\"acklund transformations for scalar hyperbolic partial differential  equations. We prove that both these types of transformations map solutions of a Darboux integrable equation into solutions of, generally speaking, another but also Darboux integrable equation. The latter fact can be used to roughly check the completeness of a list of Darboux integrable equations. To illustrate this, we apply the above transformations to several equations from a well-known list of Darboux integrable equations and, as a result, obtain a Darboux integrable equation which is absent in this list, but is already known at present.
As a generalization of the last equation, we construct a family of Darboux integrable equations that is parametrized by three arbitrary functions, each of which depends on two arguments. This family is probably new.
\end{abstract}


\keywords{nonlinear hyperbolic partial differential equations, Darboux integrability, B\"acklund transformations} 

\maketitle

\section{Introduction}
Among partial differential equations of the form 
\begin{equation}\label{hyp}
u_{xy}=F(x,y,u,u_x,u_y),
\end{equation}
an important class is formed by equations for which there exist functions
\begin{align*}
w(x,y,u,u_1, \dots, u_k), \quad &w_{u_k} \ne 0, &u_i:=\partial^i u /\partial x^i, \\ 
\bar{w}(x,y,u,\bar{u}_1, \dots, \bar{u}_m), \quad &\bar{w}_{\bar{u}_m} \ne 0, &\bar{u}_i:=\partial^i u /\partial y^i,
\end{align*}
such that for any solution of~\eqref{hyp} the function $w$ depends on $x$ only, and the function $\bar{w}$ depends on $y$ only. In other words, the relations $D_y(w)=0$ and $D_x(\bar{w})=0$ hold on solutions of~\eqref{hyp}, where $D_y$ and $D_x$ denote the total derivatives with respect to $y$ and $x$. In this case, the functions $w$ and $\bar{w}$ are respectively called an \emph{$x$-integral of order $k$} and a \emph{$y$-integral of order $m$} for the equation~\eqref{hyp}, and the equation itself is called \emph{Darboux integrable}. The most famous non-linear example of a Darboux integrable equation is the Liouville equation $u_{xy}=e^u$, for which $w= u_{xx} - u_x^2/2$, $\bar{w}= u_{yy} - u_y^2/2$.

It should be explained that, to check whether a relation holds on solutions of~\eqref{hyp}, we simply exclude from this relation all mixed derivatives of $u$ with respect to $x$ and $y$ by using the equation~\eqref{hyp} and its differential consequences (after this exclusion, the relation must be satisfied already identically). In particular, for any function $g$ of $x$, $y$, $u$, $u_i$, $\bar{u}_j$, we use the formulas 
\begin{eqnarray*}
&D_{x}(g) &= {\frac{\partial g}{\partial x}}+ {\frac{\partial
g}{\partial u}} u_1 + \sum^{\infty
}_{i=1}\left({\frac{\partial g}{\partial u_i}} u_{i+1}
+{\frac{\partial g}{\partial
\bar{u}_i}}D^{i-1}_{y}(F)\right), \\
&D_{y}(g) &=
\frac{\partial g}{\partial y}+\frac{\partial g}{\partial
u} \bar{u}_1 +\sum^{\infty }_{i=1}\left(\frac{\partial
g}{\partial \bar{u}_i} \bar{u}_{i+1} +\frac{\partial g}{\partial
u_i} D^{i-1}_x (F) \right)
\end{eqnarray*}
when these total derivatives are considered on solutions of~\eqref{hyp}. With this explanation in mind, it is easy to verify that, for example, $D_y(u_{xx} - u_x^2/2)=0$ and $D_x(u_{yy} - u_y^2/2)=0$ on the solutions of the above Liouville equation. 

The Darboux integrable equations have been studied for more than a hundred years, starting with classical works such as \cite{Gurs}. Interest in such equations continues to this day, as evidenced by relatively recent works. Among them, without pretending to be complete, we can mention \cite{ZhIzv}--\cite{ZhYu2}. In particular, the work \cite{ZhSok} contained a list of non-linear Darboux integrable equations~\eqref{hyp} which was declared complete (but with the caveat that the proof of this classification result is based on laborious calculations, takes about 200 pages, and there is a nonzero probability of mistakes). It was later noticed in \cite{Kap} that there is an error in this list, since it does not contain the Darboux integrable Laine equations \cite{Lai}. In this situation, it seems appropriate to use any, even rough, methods for checking the already known lists of Darboux integrable equations for completeness, especially if these methods are relatively simple and not laborious.

As such a relatively simple (and independent of any classification) way to test whether anything is missing in the known lists of Darboux integrable equations, in this paper we consider two kind of transformations: invertible differential substitutions introduced in \cite{Yam, SvSok}, and B\"acklund transformations inverse to the composition of the differential substitution $\tilde{u}=v_x$ and a point transformation $u=f(x,y,\tilde{u})$. At least one of the above types of transformations is applicable to almost any equation~\eqref{hyp} with the right-hand side linearly dependent on one of the derivatives. Therefore, relatively wide classes of equations~\eqref{hyp} admit these transformations, and within these classes there are Darboux integrable equations. In section~\ref{transf} we describe these transformations in more detail and prove that they map a Darboux integrable equation into, generally speaking, another but also Darboux integrable equation. Naturally, the closedness of a list of Darboux integrable equations with respect to the above transformations is a necessary but not sufficient condition for its completeness.

Section~\ref{exam} contains a list of Darboux integrable equations from~\cite{ZhSok} that most obviously allow the above transformations.\footnote{It cannot be ruled out that this list can be extended, for example, by some degenerate cases of equations from classes 6-8 in~\cite{ZhSok}.} As expected, the transformations for the overwhelming majority of the equations in this list lead again to the Darboux integrable equations contained in~\cite{ZhSok}.

But in one case, a transformation leads us to an equation that is missing in~\cite{ZhSok}. However, this equation is not new -- for example, it was mentioned as one of Moutard equations in~\cite{LaiM}. It was also shown in~\cite{LaiM} that a differential substitution relates it to one of the Darboux integrable Laine equations. The last fact allows us to construct a differential substitution that maps solutions of the Laine equation into solutions of an equation from~\cite{ZhSok}. The author does not know whether the Darboux integrability of the aforementioned Moutard equation was established back in Laine's time (strictly speaking, there is no corresponding statement in \cite{LaiM}), but at least in a much later work~\cite{ZhYu2} $x$- and $y$-integrals of this equation were in fact presented. As a generalization of the Moutard equation, we construct a family of Darboux integrable equations that is parametrized by three arbitrary functions, each of which depends on two arguments. This family is probably new.

To conclude the introduction, we note that, using the symmetry of the formula~\eqref{hyp} with respect to the interchange of $x$ and $y$, we further formulate and prove only one of the two ``symmetric'' propositions. All functions are assumed to be locally analytic in this article. The reasoning and statements are also local. In particular, when applying the implicit function theorem, we assume that the same branch of an implicit function is taken in all formulas containing this function. This does not prevent us from using the $\pm$ sign in the examples to enumerate all the branches of the implicit function (but each of them is still considered separately, without jumping from branch to branch inside one line of reasoning).

\section{ B\"acklund transformations}\label{transf}

\subsection{Reconstructing an original equation after the substitution $u=v_x$}\label{bclt}

It is easy to check that any equation of the form 
\begin{equation}\label{nou}
v_{xy}=G(x, y, v_x, v_y)
\end{equation} with the right-hand side depending on at least one of the derivatives (for definiteness, we assume that $G_{v_y} \ne 0$ ) admits a differential substitution $u=f(x,y,v_x)$, which transforms solutions of this equation into solutions of, generally speaking, another equation of the form~\eqref{hyp}. Indeed, for any function $f(x,y,v_x)$, $f_{v_x} \ne 0$, equation~\eqref{nou} can be written as
\begin{equation}\label{e1}
D_y(f(x,y,v_x))= g(x, y, v_x, v_y), \qquad g_{v_y} \ne 0.
\end{equation}
Denoting $f(x,y,v_x)$ as a new variable and taking \eqref{e1} into account, we get
\begin{equation}\label{ip}
u:=f(x,y,v_x), \qquad  u_y = g(x, y, v_x, v_y).
\end{equation}
Solving these equalities for $v_x$ and $v_y$, we obtain the system
\begin{equation}\label{e3}
v_x=\alpha (x,y,u), \qquad v_y =\varphi (x, y, u, u_y). 
\end{equation}
The compatibility condition of the system~\eqref{e3} gives us the conservation law
\begin{equation}\label{hcl}
D_x ( \varphi (x, y, u, u_y) ) = D_y (\alpha (x,y,u)), \qquad \varphi _{u_y} \alpha _u \ne 0,  
\end{equation}
which is essentially a new hyperbolic equation of the form~\eqref{hyp} (because \eqref{hcl} uniquely determines the corresponding $F$). Thus, the differential substitution $u=f(x,y,v_x)$ maps solutions of~\eqref{e1} into solutions of equation~\eqref{hcl}. Obviously, this substitution is the composition of the substitution $\tilde{u}=v_x$ and the point change of variables $u=f(x,y,\tilde{u})$.

A classical \cite{Gusr} particular case of the above transformation is the substitution $u=\sqrt{v_x}$, which maps solutions of the Goursat equation $v_{xy}=\sqrt{v_x v_y}$ into solutions of the equation $u_{xy}=u/ 4$.

In the present article, we are more interested not in the substitution~\eqref{ip}, but in the reverse transition from the equation \eqref{hcl} to the equation \eqref{e1}. This reverse transition is as follows. If equation~\eqref{hyp} admits a conservation law of the form~\eqref{hcl}, then for any solution of this equation we can define $v$ as a solution of the system~\eqref{e3} (this system is compatible due to~\eqref{hcl}). Expressing $u$ and $u_y$ in terms of $x$, $y$, $v_x$, $v_y$ by using~\eqref{e3}, we arrive at equalities of the form~\eqref{ip}, which are equivalent to the equation~\eqref{e1}.

For example, one can write the equation $u_{xy}=u/4$ in the form of the conservation law $D_x (u_y^2) = D_y (u^2/4)$ and define $v$ as a solution of the system $v_y=u_y^2$, $v_x =u^2/4$. The last equality gives us the expression $u = \pm 2 \sqrt{v_x}$. Differentiating this expression with respect to $y$ and comparing the result with the expression $u_y = \pm  \sqrt{v_y}$, we obtain the equation $v_{xy}=\pm \sqrt{v_x v_y}$.

Transformations based on conservation laws were considered, for example, in \cite{Muk}, where new independent variables $x$ and $y$ were introduced by using two special conservation laws of a hyperbolic equation. The article~\cite{GYL} also describes how a conservation law with a one-point conserved density of a differential-difference evolution equation generates a difference substitution that maps solutions of another differential-difference evolution equation into solutions of the first equation. In addition, the author actually used the transformation~\eqref{e3} in \cite{star} (without mentioning this fact in the article) to find the equation $v_{xy}= \sqrt{(1- 4 v_x^2) v_y }$, which is related to the sine-Gordon equation by a substitution of the form~\eqref{ip}.

\subsection{Invertible differential substitution}\label{invds}
It is easy to see that the equation~\eqref{hcl} is a particular case of the equation of the form
\begin{equation}\label{cl}
D_x ( \varphi (x, y, u, u_y) ) = \psi (x,y,u,u_y), \qquad \varphi _{u_y} \psi _u - \varphi _{u} \psi_{u_y} \ne 0.  
\end{equation}
For equations ~\eqref{hyp} writeable in the form~\eqref{cl}, the works \cite{Yam,SvSok} considered differential substitutions based on the same scheme as the transformation described above: introduce a new variable $v$,  express $u$ and/or its derivatives in terms of $v$ and its derivatives by virtue of~\eqref{hyp}, and then obtain a new hyperbolic equation from a compatibility condition of these expressions. As applied to~\eqref{cl}, this scheme looks as follows.

Let us introduce the new variable
\begin{equation}\label{fp}
v=\varphi (x, y, u, u_y).
\end{equation}
Then~\eqref{cl} is equivalent to the system
\begin{equation}\label{it}
v=\varphi (x, y, u, u_y), \qquad  v_x = \psi (x,y,u,u_y).
\end{equation}
Solving \eqref{it} for $u$, $u_y$, we obtain a system of the form	
\begin{equation}\label{pqs}
u=p(x,y,v,v_x), \qquad u_y = q (x, y, v, v_x). 
\end{equation}
Differentiating the first equation of this system with respect to $y$ and comparing the result with the second equation, we arrive at the equation 
\begin{equation}\label{pq}
D_y ( p(x,y,v,v_x) ) = q (x, y, v, v_x), 
\end{equation}
which is equivalent to \eqref{pqs}. Repeating the above reasoning in reverse order, it is easy to check that the differential substitution $u=p(x,y,v,v_x)$ maps solutions of~\eqref{pq} into solutions of~\eqref{cl} (for more details, see the proof of Theorem~\ref{t1}).

According to~\cite{SvSok}, any equation
\[ u_{xy} = b(x,y,u,u_y) u_x + c(x,y,u,u_y), \qquad c_u + c_{u_y} b \ne b_x + b_{u_y} c \]
can be written in the form~\eqref{cl}, choosing as $\varphi$ a solution of the equation $b \varphi_{u_y} + \varphi_{u}=0$. Any other solution of the last equation has the form $\gamma(x,y,\varphi)$ and this arbitrariness in the choice of $\varphi$ corresponds to the point change of variables $\tilde{v}=\gamma(x,y,v)$ in~\eqref{pq}. In some cases, it is possible to choose $\gamma$ so that~\eqref{cl} takes the form~\eqref{hcl}. For example, the equation $u_{xy} = u_x /(x+y)$ can be written in the form~\eqref{cl} as follows
\begin{equation}\label{prex}
D_x \left( u_y - \frac{u}{x+y} \right) = \frac{u}{(x+y)^2}.
\end{equation}
Choosing $\gamma=\varphi^2$ and taking~\eqref{prex} into account, it is easy to check that
\begin{equation*}
D_x \left( \left( u_y - \frac{u}{x+y} \right)^2 \right) = 2 \left( u_y - \frac{u}{x+y} \right) \frac{u}{(x+y)^2} = D_y \left(\frac{u^2}{(x+y)^2} \right).
\end{equation*}

Thus, the equations~\eqref{hyp} representable in the form~\eqref{hcl} can be related both to equations~\eqref{e1} via the B\"acklund transformation~\eqref{e3} and to equations~\eqref{pq} by the differential substitution~\eqref{fp}. It is clear that the equations~\eqref{e1} and \eqref{pq} are, generally speaking, different. For example, as shown above, the equation $u_{xy}=u/4$ is related by a B\"acklund transformation~\eqref{e3} to the Goursat equation, while the substitution $v=u_y$ of the form~\eqref{fp} maps solutions of this linear\footnote{Up to a point change of variables, the transformation~\eqref{fp} for linear equations coincides with the local Laplace transformation; you can read more about the latter transformation, for example, in \cite{Tr}, \cite{ZhSok}.} equations again into solutions of the same equation $v_{xy}=v/4$.

\subsection{Preservation of the Darboux integrability under the transformations}

We will prove below that both the substitutions~\eqref{fp} and the transformations~\eqref{e3} preserve the Darboux integrability. For this proof we need the following statement.

\begin{lemma}\label{sint}
Let equation~\eqref{hyp} can be written in the form~\eqref{cl} and the relation
\begin{equation}\label{pint}
D_x (\bar{w}(x,y,u,\bar{u}_1, \dots, \bar{u}_m)) = h(x,y,u,\bar{u}_1, \dots, \bar{u}_m)
\end{equation}
hold on solutions of this equation for some functions $\bar{w}$ and $h$. Then $\bar{w}$ is a function of $x$, $y$,  $\varphi(x, y, u, u_y)$ and its total derivatives with respect to $y$:
\begin{equation}\label{ze}
\bar{w}= \bar{\theta} (x,y, \varphi, D_y(\varphi), \dots, D_y^{m-1} (\varphi)).
\end{equation}
\end{lemma}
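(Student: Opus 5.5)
The idea is to change jet coordinates on the $y$-side. I will trade $u,\bar{u}_1,\dots,\bar{u}_m$ for $u, \varphi, D_y(\varphi),\dots,D_y^{m-1}(\varphi)$ and then show that $\bar{w}$ cannot depend on $u$ in the new coordinates.

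\emph{Coordinates.} Since $\varphi_{u_y}\ne 0$ (by \eqref{cl}, $\varphi_{u_y}$ and $\varphi_u$ cannot both vanish; I assume $\varphi_{u_y}\neq0$, otherwise the reasoning is symmetric with a relabelling), we have $D_y^{k}(\varphi)=\varphi_{u_y}\bar{u}_{k+1}+(\text{terms in } x,y,u,\bar{u}_1,\dots,\bar{u}_k)$. Thus the map $(x,y,u,\bar{u}_1,\dots,\bar{u}_m)\mapsto(x,y,u,\varphi_0,\dots,\varphi_{m-1})$, where $\varphi_k:=D_y^k(\varphi)$, is locally invertible. We can therefore write $\bar{w}=\Theta(x,y,u,\varphi_0,\dots,\varphi_{m-1})$ and $h=H(x,y,u,\varphi_0,\dots,\varphi_{m-1})$, and the goal is $\Theta_u=0$.

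\emph{Computing $D_x$ in the new coordinates.} On solutions, $D_x(\varphi_0)=\psi$ by \eqref{cl}. Since $D_x$ and $D_y$ commute, $D_x(\varphi_k)=D_y^k(\psi)$, and this is a function of $x,y,u,\bar{u}_1,\dots,\bar{u}_{k+1}$ only, i.e.\ of $u,\varphi_0,\dots,\varphi_k$; in particular $u_x$ never appears. The only term carrying $u_x$ is $D_x(u)=u_x$. Hence
\[
D_x(\bar w)=\Theta_x+\Theta_u\,u_x+\sum_{k=0}^{m-1}\Theta_{\varphi_k}D_y^k(\psi).
\]
Here $u_x$ is an independent variable on solutions, whereas $h$ and all the other terms are independent of $u_x$. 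Comparing the coefficients of $u_x$ in \eqref{pint} gives $\Theta_u=0$, which is exactly \eqref{ze}.

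\emph{Main obstacle.} The delicate point is making sure that nothing besides $D_x(u)$ introduces $u_x$, or the variables $u_i$ with $i\ge 2$, through the terms $\bar{u}_i$ in $\bar{w}$. This is handled by the identity $D_x(\varphi_k)=D_y^k(\psi)$, which depends only on $y$-derivatives. That identity is justified by the commutation $[D_x,D_y]=0$ on solutions of \eqref{hyp}, together with the fact that $\psi$ does not depend on $u_x$. I should also check that the nondegeneracy condition in \eqref{cl} gives $\varphi_{u_y}\ne0$ as used above; if only $\varphi_u\neq0$, then $\varphi$ is independent of $u_y$ and the change of variables must be adapted. I expect this to be excluded in practice, since otherwise $D_x\varphi=\psi$ would be a first-order relation in $u_x$ rather than a hyperbolic equation.
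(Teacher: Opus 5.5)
Your proof is correct and follows the paper's argument essentially verbatim. You rewrite $\bar w$ in the coordinates $(x,y,u,\varphi,D_y(\varphi),\dots,D_y^{m-1}(\varphi))$, use $D_x(D_y^k(\varphi))=D_y^k(\psi)$ on solutions, and observe that $u_x$ enters only through $\bar\theta_u u_x$, which forces $\bar\theta_u=0$. The one weak spot is your hedging about $\varphi_{u_y}$: the hypothesis that \eqref{hyp} can be written as \eqref{cl} already forces $\varphi_{u_y}\neq 0$, since otherwise $D_x(\varphi)$ contains no $u_{xy}$. So there is no ``symmetric relabelling'' case to treat, and your closing remark is the right justification.
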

It should be noted that in this paper we only need the case $h=0$.

\begin{proof} Let us rewrite $\bar{w}$ as $\bar{\theta} (x,y, u, \varphi, D_y(\varphi), \dots, D_y^{m-1} (\varphi))$. The equalities $D_x(D_y^i(\varphi)) = D_y^i(\psi)$ hold on solutions of~\eqref{hyp} as a consequence of~\eqref{cl}. Therefore,~\eqref{pint} becomes the form
\begin{equation}\label{loc}
\bar{\theta}_x + u_x \bar{\theta}_u + \psi \bar{\theta}_{\varphi} + \dots = h. 
\end{equation}
The above equality does not contain mixed derivatives of $u$ with respect to $x$ and $y$. That is, this equality already takes account of all dependencies between derivatives generated by the fact that~\eqref{pint} holds on solutions of~\eqref{hyp}. Therefore,~\eqref{loc} must hold identically. Since only the second term on the left-hand side depends on $u_x$ in~\eqref{loc}, the relation \eqref{pint} can hold only in the case $\bar{\theta}_u = 0$.
\end{proof}

\begin{theorem}\label{t1} Let equation~\eqref{hyp} can be written in the form~\eqref{cl} and admit an $x$-integral $w$ of order $k$ (a $y$-integral $\bar{w}$ of order $m$). Then the equation~\eqref{pq} related to~\eqref{cl} by the transformation \eqref{fp}-\eqref{pqs} admits the $x$-integral
\begin{equation}\label{pqy}
\theta = w(x,y,p,D_x(p),\dots, D_x^k(p)) 
\end{equation}
of order $k+1$ (the $y$-integral $\bar{\theta}(x,y,v,\dots,\bar{v}_{m-1})$ of order $m-1$, where $\bar{\theta}$ is related to $\bar{w}$ by formula~\eqref{ze}). 
\end{theorem}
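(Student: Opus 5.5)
First I would make precise the correspondence between solutions. Let $v$ be a solution of~\eqref{pq} and set $u=p(x,y,v,v_x)$. By~\eqref{pq}, $u_y=q(x,y,v,v_x)$. Since \eqref{pqs} was obtained by solving \eqref{it}, the pair $(u,u_y)$ satisfies $v=\varphi(x,y,u,u_y)$ and $v_x=\psi(x,y,u,u_y)$. Differentiating the first of these relations with respect to $x$ and comparing with the second gives $D_x(\varphi)=\psi$, which is \eqref{cl}. So $u=p$ solves \eqref{hyp}.

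Conversely, the substitution is locally surjective onto solutions, and the two equations are related as follows. Every solution $u$ of \eqref{hyp} comes from $v=\varphi(x,y,u,u_y)$. Total derivatives of functions of $u$ pull back to total derivatives of functions of $v$. More precisely, on solutions of~\eqref{pq} we have $D_x^i(u)=D_x^i(p)$ and $D_y^j(u)=D_y^{j-1}(q)$, and each side is computed with the corresponding total derivatives. Also, $D_x$ and $D_y$ commute with the substitution.

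For the $x$-integral, I would argue as follows. The relation $D_y(w)=0$ holds on every solution of \eqref{hyp}, hence on $u=p(x,y,v,v_x)$ for every solution $v$ of~\eqref{pq}. Therefore $\theta=w(x,y,p,D_x(p),\dots,D_x^k(p))$ satisfies $D_y(\theta)=0$ on solutions of~\eqref{pq}. This requires the identity to be checked as a differential identity, not merely pointwise. Since the solutions of~\eqref{pq} have locally arbitrary Goursat data, a relation that vanishes on all solutions vanishes identically after eliminating mixed derivatives. The function $\theta$ depends on $x$-derivatives of $v$ only up to order $k+1$.

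The main obstacle is the order claim, namely $\theta_{v_{k+1}}\ne0$, where $v_i$ denotes the $i$-th $x$-derivative. Here $D_x^k(p)=p_{v_x}v_{k+1}+\dots$. Because of \eqref{hyp}, $D_x(p)$ never involves mixed derivatives once $p$ is expressed through $v$ and $v_x$. So $\partial\theta/\partial v_{k+1}=w_{u_k}\,p_{v_x}$. The factor $w_{u_k}$ is nonzero by assumption. It remains to show $p_{v_x}\ne0$. From \eqref{it}, the Jacobian of $(\varphi,\psi)$ with respect to $(u,u_y)$ is nonzero, which is exactly the nondegeneracy condition in~\eqref{cl}. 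Its inverse gives $p_{v_x}=-\varphi_{u_y}/(\varphi_{u_y}\psi_u-\varphi_u\psi_{u_y})$. If $\varphi_{u_y}=0$, then \eqref{cl} would be first order in $u_y$ and could not be equivalent to \eqref{hyp}; I would rule this out in this way, which gives $\varphi_{u_y}\ne0$. Hence the order of $\theta$ is $k+1$.

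For the $y$-integral, Lemma~\ref{sint} with $h=0$ gives $\bar w=\bar\theta(x,y,\varphi,D_y\varphi,\dots,D_y^{m-1}\varphi)$. Under the substitution, $\varphi(x,y,u,u_y)=v$ and $D_y^i\varphi=\bar v_i$. The function $\bar\theta(x,y,v,\bar v_1,\dots,\bar v_{m-1})$ then satisfies $D_x\bar\theta=0$ on solutions of~\eqref{pq}, because its value equals that of $\bar w$ evaluated at $u=p$, and the same identity argument applies. Its order is exactly $m-1$: the term $\bar u_m$ enters only through $D_y^{m-1}\varphi=\varphi_{u_y}\bar u_m+\dots$. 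This gives $\bar w_{\bar u_m}=\bar\theta_{\bar v_{m-1}}\varphi_{u_y}$, and so $\bar\theta_{\bar v_{m-1}}\ne0$.
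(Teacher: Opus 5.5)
Your proof is correct and follows the paper's argument: you pull back the identities $D_y(w)=0$ and $D_x(\bar w)=0$ along the substitution $u=p(x,y,v,v_x)$, using $v\equiv\varphi(x,y,p,q)$, $v_x\equiv\psi(x,y,p,q)$ and Lemma~\ref{sint} for $\bar\theta$. You also check the exact orders via $\theta_{v_{k+1}}=w_{u_k}p_{v_x}$ and $\bar w_{\bar u_m}=\bar\theta_{\bar v_{m-1}}\varphi_{u_y}$, which the paper leaves implicit; note that in fact $p_{v_x}=\varphi_{u_y}/(\varphi_{u_y}\psi_u-\varphi_u\psi_{u_y})$, without your minus sign, but this does not affect the nonvanishing.
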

Thus, the above theorem can be applied in situations where the equation~\eqref{cl} has an integral with respect to only one of the characteristics, but guarantees the Darboux integrability of the equation~\eqref{pq} if the equation~\eqref{cl} is Darboux integrable. Applying exactly the same (up to interchanging $x$ and $y$) statement to the reverse transition from~\eqref{pq} to~\eqref{cl}, we get that the presence of $x$- and $y$-integrals of order $k+1$ and $m-1$ for~\eqref{pq} implies the existence of $x$- and $y$-integrals of order $k$ and $m$ for the equation~\eqref{cl}. Therefore, if $w$ and $\bar{w}$ are integrals of the smallest orders for~\eqref{cl}, then $\theta$ and $\bar{\theta}$ are also integrals of the smallest orders for~\eqref{pq}. Note that the case of first-order $y$-integrals for~\eqref{cl} is impossible by Lemma~\ref{sint} (otherwise $\varphi$ and $\psi$ turn out to be functions of $x$, $y$ and $\bar{w}$ that contradicts the inequality in the formula~\eqref{cl}).
\begin{proof}
Let us consider in more detail the reverse transition from the equation~\eqref{pq} to the equation~\eqref{cl}. Starting from~\eqref{pq} and introducing a ``new'' variable $u=p(x,y,v,v_x)$, we obtain the system~\eqref{pqs} by virtue of the equation~\eqref{pq}. The functions $p$ and $q$ satisfy the identities
\[ v \equiv \varphi (x, y, p, q), \qquad  v_x \equiv \psi (x,y,p,q) \]
by construction. Therefore, the first of formulas~\eqref{pqs} maps solutions of~\eqref{pq} into functions $u$ that satisfy the equalities~\eqref{it} and, as a consequence, the equation~\eqref{cl}. Thus, 
\begin{equation}\label{up}
u=p(x,y,v,v_x) 
\end{equation}
is a solution of~\eqref{cl} for any solution of~\eqref{pq}. And since $D_y(w)=0$ on solutions of~\eqref{cl} (in particular, on solutions obtained from  solutions of~\eqref{pq} by the differential substitution~\eqref{up}), the equality $D_y(\theta)=0$ holds on solutions of~\eqref{pq}, where $\theta$ is given by formula~\eqref{pqy}.

Similarly, since the first of the equalities~\eqref{it} holds for any solution $v$ of~\eqref{pq} and the corresponding solution of~\eqref{cl} obtained from $v$ by formula~\eqref{up}, we have 
\begin{align*}
D_x(\bar{\theta}(x,y,v,\dots,\bar{v}_{m-1})) &= D_x(\bar{\theta}(x,y,\varphi (x,y,u,u_y),\dots,D_y^{m-1}(\varphi (x,y,u,u_y)))) \\
& = D_x (\bar{w}(x,y,u,\dots,\bar{u}_m)) =0
\end{align*} 
on solutions of~\eqref{pq}. 
\end{proof}

The existence of integrals and, in particular, the Darboux integrability is also preserved in going from the equation~\eqref{hcl} to the equation~\eqref{e1} by using the B\"acklund transformation~\eqref{e3}. This fact is proved almost completely in the same way as in the proof of Theorem~\ref{t1}. This is why we omit the proof of the following proposition.

\begin{theorem}\label{t2} Let equation~\eqref{hyp} can be written in the form~\eqref{hcl} and admit an $x$-integral $w$ of order $k$ (a $y$-integral $\bar{w}$ of order $m$). Then the equation~\eqref{e1} related to~\eqref{hcl} by the B\"acklund transformation~\eqref{e3} admits the $x$-integral
\begin{equation*}
\theta = w(x,y,f,D_x(f),\dots, D_x^k(f)) 
\end{equation*}
of order $k+1$ (the $y$-integral $\bar{\theta}(x,y,v_y,\dots,\bar{v}_{m})$ of order $m$, where $\bar{\theta}$ is related to $\bar{w}$ by formula~\eqref{ze}). 
\end{theorem}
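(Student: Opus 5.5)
We follow the scheme of the proof of Theorem~\ref{t1}, with the substitution $u=f(x,y,v_x)$ playing the role of \eqref{up}. The first step is to show that this substitution maps every solution $v$ of \eqref{e1} to a solution of \eqref{hcl}. Set $u=f(x,y,v_x)$. Then \eqref{e1} gives $u_y=g(x,y,v_x,v_y)$, so \eqref{ip} holds. By construction $\alpha$ and $\varphi$ are obtained by solving \eqref{ip} for $v_x,v_y$, so we have the identities $v_x\equiv\alpha(x,y,f)$ and $v_y\equiv\varphi(x,y,f,g)$. Hence $v$ satisfies \eqref{e3} with this $u$. The mixed-derivative compatibility $D_x(v_y)=D_y(v_x)$ then yields \eqref{hcl} for $u$. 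Because \eqref{hcl} determines $F$ uniquely, $u$ solves \eqref{hyp}.

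Next we treat the $x$-integral. Since $D_y(w)=0$ on every solution of \eqref{hyp}, it holds in particular on solutions of the form $u=f(x,y,v_x)$. Therefore $D_y(\theta)=0$ on solutions of \eqref{e1}, where $\theta=w(x,y,f,D_x(f),\dots,D_x^k(f))$. To check the order, note that $D_x^k(f)=f_{v_x}v_{k+1}+\dots$, where $v_i:=\partial^iv/\partial x^i$. This gives $\theta_{v_{k+1}}=w_{u_k}f_{v_x}^{\,k+1}\ne0$ because $f_{v_x}\ne0$. One small point needs attention: we must confirm that ``holds on all solutions'' is the same as the formal total-derivative identity after eliminating mixed derivatives. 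This follows from local existence of solutions (Goursat problem) with arbitrary Taylor data along the characteristics, exactly as the paper implicitly uses in Theorem~\ref{t1}.

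Then we treat the $y$-integral. Since \eqref{hcl} is a particular case of \eqref{cl} with $\psi=D_y(\alpha)$, Lemma~\ref{sint} with $h=0$ gives $\bar w=\bar\theta(x,y,\varphi,D_y(\varphi),\dots,D_y^{m-1}(\varphi))$. The second relation in \eqref{e3} says $v_y=\varphi(x,y,u,u_y)$, so $D_y^i(\varphi)=\bar v_{i+1}$, where $\bar v_i:=\partial^iv/\partial y^i$. Thus, on solutions of \eqref{e1} with $u=f(x,y,v_x)$,
\[
D_x\bigl(\bar\theta(x,y,\bar v_1,\dots,\bar v_m)\bigr)=D_x\bigl(\bar w(x,y,u,\dots,\bar u_m)\bigr)=0 .
\]
It remains to show the order is exactly $m$, that is, $\bar\theta$ depends essentially on its last argument. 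Write $D_y^{m-1}(\varphi)=\varphi_{u_y}\bar u_m+\dots$. Then $\bar w_{\bar u_m}=\bar\theta_{D_y^{m-1}\varphi}\,\varphi_{u_y}$, and this is nonzero, so $\bar\theta_{\bar v_m}\ne0$.

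The only step needing care is identifying $D_x$ on solutions of \eqref{e1} with $D_x$ on solutions of \eqref{hyp} after the substitution. This is handled by the same argument as in Theorem~\ref{t1}: both sides are evaluated on the actual function $u$ built from a solution $v$, and such $v$ are sufficiently arbitrary.
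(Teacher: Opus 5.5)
Your proposal is correct and is exactly the adaptation of the proof of Theorem~\ref{t1} that the paper intends: substitute $u=f(x,y,v_x)$, pull back $w$, and for the $y$-integral use Lemma~\ref{sint} together with $v_y=\varphi(x,y,u,u_y)$, with the order checks made explicit. One harmless slip: the coefficient of $v_{j+1}$ in $D_x^j(f)$ is always $f_{v_x}$, so $\theta_{v_{k+1}}=w_{u_k}f_{v_x}$ rather than $w_{u_k}f_{v_x}^{\,k+1}$, and the nonvanishing conclusion is unaffected.
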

Note that, in contrast to Theorem~\ref{t1}, the order of the integral $\bar{\theta}$ is not less than the order of the integral $\bar{w}$ in Theorem~\ref{t2}. However, there are no guarantees that $\theta$ and $\bar{\theta}$ obtained by the formulas of Theorem~\ref{t2} are the smallest order integrals for~\eqref{e1} even if corresponding $w$ and $\bar{w}$ are integrals of the smallest orders for~\eqref{hcl}.  Examples show that the smallest orders of integrals for the equation~\eqref{e1} are often  less than the orders of the integrals that can be obtained by using Theorem~\ref{t2}.

\section{Examples of the transformations \protect\\ for Darboux integrable equations}\label{exam}

\subsection{The list of tested examples}

Among the Darboux integrable equations~\eqref{hyp} in the paper~\cite{ZhSok}, at least equations with the following right-hand sides 
\begin{equation}\label{spi}
e^u,\quad u u_y,\quad e^u u_y,\quad (e^u+e^{-u}) u_y,\quad \sqrt{1-u_y^2},\quad e^u \sqrt{1-u_y^2}   
\end{equation}
can be written in the form~\eqref{cl}. We can also represent all the equations from the list~\eqref{spi} in the form~\eqref{hcl}.
The equation
\begin{equation}\label{gib}
u_{xy} = \left( \frac{u_y}{u-x} + \frac{u_y}{u-y} \right) u_x 
\end{equation}
can be written in the form~\eqref{cl} too.

As expected, the B\"acklund transformations~\eqref{e3} and the differential substitutions~\eqref{fp} in most cases (namely, for all the equations listed in~\eqref{spi}) does not take us beyond the list of Darboux integrable equations from the paper~\cite {ZhSok}. Among examples of this kind, we consider only the equation
\begin{equation}\label{root}
u_{xy} = \sqrt{1-u_y^2}. 
\end{equation}
This equation is remarkable due to the fact that the B\"acklund transformation~\eqref{e3} (as well as the inverse substitution~\eqref{ip}) maps~\eqref{root} into itself.

Indeed, \eqref{root} can be written as $D_x \left( \sqrt{1-u_y^2} \right) = - D_y (u)$. Then
\[ v_x= u, \quad v_y= -\sqrt{1-u_y^2} \quad \Rightarrow \quad u_y= \pm \sqrt{1-v_y^2}. \]
Differentiating the first of the above equalities with respect to $y$ and comparing the result with the third one, we arrive at the equation $v_{xy} = \pm \sqrt{1-v_y^2}$, which coincides with~\eqref{root} up to the change of notation $u=\pm v$. Quite similarly to~\cite{S21}, one can prove that the substitution~\eqref{fp} cannot map a Darboux-integrable equation of the form \eqref{cl} into itself. The just given example shows that the same is not true for the transformations~\eqref{ip}-\eqref{e3}.

Applying a differential substitution of the form~\eqref{fp} to the equation~\eqref{gib}, we obtain a Darboux integrable equation, which this time is absent in the list from~\cite{ZhSok}. To do this, we write the equation~\eqref{gib} as
\[ D_x \left( \ln \left( \frac{u_y}{(u-x)(u-y)} \right) \right) = \frac{1}{u-x}. \]
Following the scheme of the transformation described in subsection~\ref{invds}
we obtain
\begin{align}
e^v &= u_y/((u-x)(u-y)), &v_x = 1/(u-x); \label{1g} \\
u_y &= e^v  v_x^{-1} (x-y + v_x^{-1}) ,  &u = x + v_x^{-1}. \label{2g}
\end{align}
The compatibility condition of the system~\eqref{2g} leads to the equation
\begin{equation}\label{mut}
 v_{xy} = (y-x) e^v v_x - e^v.
\end{equation} 
It is easy to see from the equation~\eqref{mut} itself that $\bar{\theta}=v_y + (x-y) e^v$ is a first-order $y$-integral of this equation, and its $x$-integral
\begin{equation}\label{mutyi}
\theta = D_x (\ln (v_{xx} - v_x^2)) - v_x 
\end{equation}
of the smallest order is obtained from the smallest order $x$-integral 
\[w = u_{xx}/u_x + (1 -2 u_x)/(u-x )\]
of the equation~\eqref{gib} by using formula~\eqref{pqy} with $p=x + v_x^{-1}$.

Below we prove that the equation~\eqref{mut} is absent in~\cite{ZhSok}. However, this equation is not new. For example, it was given in~\cite{LaiM} as one of Moutard equations\footnote{In~\cite{LaiM}, an equations of the form~\eqref{hyp} was called a Moutard equation if there exists an explicit formula for its solutions that depends on arbitrary functions $X(x)$, $Y(y)$ and their derivatives. In the same work $v=\ln \frac{Y''}{Y-Y'(y-x)+X}$ was indicated as such a formula for~\eqref{mut}. The existence of solutions of this kind is used as an alternative definition of Darboux integrability in some papers. But, as far as the author knows, there is no rigorously proven statement about the existence of $x$- and $y$-integrals for such equations.}. The author does not know whether the Darboux integrability of~\eqref{mut} was established back in Laine's time (strictly speaking, there is no corresponding statement in \cite{LaiM}), but at least in a much later work~\cite{ZhYu2} $x$- and $y$-integrals of this equation were in fact presented. To see this, it suffices to compare the last paragraph of Section 2 in~\cite{ZhYu2} with the formula for $W$ immediately after formula~(58) in the same paper.

By the way, in the paper~\cite{LaiM} mentioned above, Laine related~\eqref{mut} with one of the Darboux integrable equations he found, namely with the equation
\begin{equation}\label{lai}
\vartheta_{xy} = \vartheta_y \left( \frac{\vartheta_x}{\vartheta-x}+\frac{\vartheta_x}{\vartheta-y} \right) + \frac{\vartheta_y \sqrt{\vartheta_x}}{\vartheta-x} ,
\end{equation} 
by using the differential substitution $2e^v =\vartheta_y/((\vartheta-x)(\vartheta-y))$, which almost coincides with the first formula~\eqref{1g}. Let us consider the composition of this differential substitution and the substitution $u=x + v_x^{-1}$, which maps solutions of~\eqref{mut} into solutions of~\eqref{gib}. If we eliminate mixed derivatives of $\vartheta$ from this composition by using~\eqref{lai}, then we get that the substitution $u= x+(\vartheta-x)/(\sqrt{\vartheta_x}+1)$ maps solutions of~\eqref{lai} into solutions of the equation~\eqref{gib}.

\subsection{A note on the equivalence problem and a generalization of the Moutard equation}
The work~\cite{ZhSok} enumerates Darboux integrable equations up to point transformations
\begin{equation}\label{ptran}
x=\xi (\tilde{x}), \qquad y=\eta (\tilde{y}), \qquad u=\lambda(\tilde{x},\tilde{y},\tilde{u})
\end{equation}
and the interchange $x \leftrightarrow y$.
To demonstrate that the Moutard equation~\eqref{mut} cannot be reduced to an equation from~\cite{ZhSok} by a transformation of the form~\eqref{ptran}, it is convenient to employ functions $g$ of $x$, $y$, $u$ and its derivatives with respect to $x$ such that the relation
\begin{equation}\label{class}
F_{u_x} = D_y (g(x,y,u,u_1,\dots, u_n))
\end{equation}
holds on solutions of~\eqref{hyp}. The number $n$ is called the order of the function $g$ if $g_{u_n} \ne 0$, and we define the order of $g$ to be zero when $g$ does not depend on derivatives of $u$. If equation~\eqref{hyp} possesses an $x$-integral $w$ of order $k$, then $g=-\ln(w_{u_k})$ satisfies the relation~\eqref{class} (see~\cite{ZhIzv} for example). This is why such functions $g$ always exist for any Darboux integrable equation~\eqref{hyp}. 

Let $r$ be the smallest order of $x$-integrals for equation~\eqref{hyp}, and let a function $g$ of order $n<r$ satisfy the relation~\eqref{class} on solutions of this equation. Then there is no function $\tilde{g}$ of order $\tilde{n}<n$ satisfying~\eqref{class}, because $g-\tilde{g}$ is an $x$-integral of order $n< r$ otherwise. In particular, the last fact together with the formulas~\eqref{mutyi} and $g=-\ln(\theta_{v_{xxx}})$ imply that the relation~\eqref{class} cannot hold on solutions of~\eqref{mut} for a function $g$ of order less than 2.

On the other hand, the straightforward calculation shows that the relation~\eqref{class} and the order of the corresponding function $g$ are preserved under transformation~\eqref{ptran}. Hence, the Moutard equation~\eqref{mut} cannot be related via~\eqref{ptran} to any equation for which the relation~\eqref{class} holds with $g$ of order 1 or 0. But for almost all Darboux integrable equation in~\cite{ZhSok} both the relation~\eqref{class} and its ``symmetric'' version 
\[ F_{u_y} = D_x (\bar{g}(x,y,u,\bar{u}_1,\dots, \bar{u}_\ell )) \]
hold for functions $g$ and $\bar{g}$ that do not depend on the second and higher derivatives of $u$. 
After excluding these equations from consideration, the only Darboux integrable equation remains in the work \cite{ZhSok}. But the last equation (numbered by 9 in~\cite{ZhSok}) has no first-order integral and therefore also cannot be related to~\eqref{mut} by transformations~\eqref{ptran}. Thus, the equation~\eqref{mut} is missed in~\cite{ZhSok}.

The Moutard equation~\eqref{mut} can be generalized in the following way. Any function $\bar{w}(x,y,u,u_y)$, $\bar{w}_{u_y} \ne 0$, is a $y$-integral of the equation
\begin{equation}\label{Pint} 
u_{xy} = - \frac{\bar{w}_u u_x + \bar{w}_x}{\bar{w}_{u_y}}.   
\end{equation}
Let $\bar{w}$ be defined by an equality of the form $u_y=\zeta (x,y,u,\bar{w})$, $\zeta_{\bar{w}} \ne 0$. Differentiating this equality with respect to $u_y$, $u$ and $x$, we respectively obtain 
\[ \bar{w}_{u_y} = \frac{1}{\zeta_{\bar{w}}}, \qquad  \bar{w}_u = - \frac{\zeta_u}{\zeta_{\bar{w}}}, \qquad  \bar{w}_x = - \frac{\zeta_x}{\zeta_{\bar{w}}}. \]
Therefore, the equation~\eqref{Pint} can be rewritten as
\begin{equation}\label{zint} 
u_{xy} = \zeta_u u_x + \zeta_x.
\end{equation}
Taking $D_x(\bar{w})=0$ into account and assuming  $\zeta$ has the form $(x\, \alpha(y,\bar{w}) + \beta(y,\bar{w})) e^u + \gamma(y,\bar{w})$, we see that
\begin{equation}\label{prei} 
D_y (\ln (u_{xx}-u_x^2))=\zeta_u, \qquad D_x(\zeta_u) = \zeta_u u_x + \zeta_x 
\end{equation}
on solutions of~\eqref{zint}. Formulas~\eqref{prei} imply that $w=D_x(\ln (u_{xx}-u_x^2)) - u_x$ is an $x$-integral of~\eqref{zint}. Indeed,
\[ D_y (w) = D_x (D_y(\ln (u_{xx}-u_x^2))) - \zeta_u u_x - \zeta_x = D_x(\zeta_u) - \zeta_u u_x - \zeta_x = 0. \]

Thus, the equation~\eqref{Pint} is Darboux integrable if there exist functions $\alpha$, $\beta$ and $\gamma$ such that the function $\bar{w}$ satisfies the relation
\begin{equation}\label{dco}
u_y = (x\, \alpha(y,\bar{w}) + \beta(y,\bar{w})) e^u + \gamma(y,\bar{w}).
\end{equation}
The Moutard equation~\eqref{mut} is a particular case of~\eqref{Pint} with $\alpha=-1$, $\beta=y$ and $\gamma=\bar{w}$. Therefore, the equation~\eqref{Pint} with $\bar{w}$ defined by~\eqref{dco} is, generally speaking, absent in~\cite{ZhSok} and is probably new as a Darboux integrable equation in the case $\alpha \ne 0$. (The work~\cite{ZhSok} contains Darboux integrable equations of the form~\eqref{Pint}, but only in the case $\bar{w}_u  \bar{w}_x = 0$.)

\end{document}